\def\MODE{3}
\newtheorem{theorem}{Theorem}
\newtheorem{remark}{Remark}
\newcommand{\bmtx}{\begin{bmatrix}}
\newcommand{\emtx}{\end{bmatrix}}
\newcommand{\bsmtx}{\left[ \begin{smallmatrix}} 
\newcommand{\esmtx}{\end{smallmatrix} \right]}
\newcommand{\bmatarray}[1]{\left[\begin{array}{#1}}
\newcommand{\ematarray}{\end{array}\right]}
\begin{document}
	
	\title{Control Interpretations for First-Order Optimization Methods}
	
	\if\MODE3\author{Bin Hu \and Laurent Lessard}
	\else\author{Bin Hu\footnotemark[1] \and Laurent Lessard\footnotemark[1]$^,$\footnotemark[2]}\fi
	
	\if\MODE2
	\note{}
	\else
	\note{\phantom{Tp}\vspace{-1pt}}
	\fi
	\maketitle

	\if\MODE3\else
	\footnotetext[1]{B.~Hu and L.~Lessard are with the Optimization Group at the Wisconsin Institute for Discovery.}
	\footnotetext[2]{L.~Lessard is also with the Department of Electrical and Computer Engineering, University of Wisconsin--Madison, Madison, WI~53706, USA. Email: \texttt{\{bhu38, laurent.lessard\}@wisc.edu}}
	\fi

\begin{abstract}
First-order iterative optimization methods play a fundamental role in large scale optimization and machine learning. This paper presents control interpretations for such optimization methods. First, we give loop-shaping interpretations for several existing optimization methods and show that they are composed of basic control elements such as PID and lag compensators. Next, we apply the small gain theorem to draw a connection between the convergence rate analysis of optimization methods and the input-output gain computations of certain complementary sensitivity functions. These connections suggest that standard classical control synthesis tools may be brought to bear on the design of optimization algorithms.
\end{abstract}

\section{Introduction}\label{Introduction}

First-order iterative optimization methods have been widely applied in data science and machine learning~\cite{bubeck2015,shalev2014}. These methods only require access to first-order derivative information, and iterate on the data until satisfactory convergence is achieved. For example, the gradient method is
\begin{equation}\label{eq:FG}
x^{k+1} = x^k - \alpha \grad f(x^k).
\end{equation}
Such simple methods are often favored over higher order methods such as Newton's method when the dimension of the underlying space is large and computing Hessians is prohibitively expensive.

There has been significant recent interest in finding ways to accelerate the convergence of the gradient method while maintaining low iteration costs.
For example, the \emph{Heavy-ball} method includes an additional momentum term
\begin{equation}\label{eq:heavy}
x^{k+1} = x^k - \alpha \grad f(x^k) + \beta(x^k-x^{k-1}).
\end{equation}
This slight modification can yield a dramatic improvement in worst-case convergence rate if $f$ is quadratic. A similar acceleration scheme, \emph{Nesterov's accelerated method}, can improve the convergence rate for strongly convex $f$ with smooth gradients~\cite{Lessard2014,YEN03a}. These convergence results are derived on a case-by-case basis, and the intuition behind the acceleration is still not fully understood.  

Recent efforts have adopted a dynamical system (or differential equation) perspective in analyzing acceleration for convex objectives~\cite{Su2014NIPS,wibisono2016}, though a more general understanding  of acceleration is still lacking (non-convex objectives, inexact computations, etc).
This paper aims to bring new insights on how to accelerate first-order optimization methods for objective functions which are not convex in general. Our main contributions are as follows:
\begin{enumerate}

\item We pose the iterative optimization paradigm as an output regulation problem, which lends itself to a loop-shaping interpretation. In particular, we show that several popular optimization algorithms may be viewed as controllers which are composed of basic PID or lag compensation elements. We also demonstrate that existing parameter tuning guidelines for these optimization methods are consistent with the loop-shaping design guidelines in control theory. 

\item Using the small gain theorem~\cite{zames1966}, we draw a connection between the convergence rate analysis of optimization methods (under sector-bounded assumptions) and the input-output gain computation for a particular complimentary sensitivity function. It follows that the \emph{design} of optimization algorithms for sector-bounded functions can be interpreted as $\mathcal{H}_\infty$ state feedback synthesis. This explains why acceleration typically requires stronger function assumptions (not necessarily convexity)  beyond just sector-bounded gradients.

\end{enumerate}
A related line of research has emerged in the distributed optimization literature~\cite{jakovetic2015,kia2015,wang2010control,wang2011control}. In \cite{wang2010control,wang2011control}, a continuous-time differential equation was used to describe the dynamics of distributed optimization, leading to a natural iterative algorithm which may be interpreted as a PI controller. In \cite{kia2015}, event-triggered control methods were tailored for distributed optimization over networks. In contrast with the work on distributed optimization, the present work is concerned with control-theoretic properties and interpretations of a big class of first-order optimization 
methods.


A second related line of research is the unified integral quadratic constraint framework in~\cite{Lessard2014}, which provides a numerical tool based on semidefinite programming for use in analyzing optimization algorithms. In contrast with this work, the present work uses a small gain approach with a simple interpretation that interfaces with existing results on complementary sensitivity integrals.


The paper is organized as follows. Section~\ref{sec:PF} explains notation and problem formulation, Section~\ref{sec:loopshape} describes our loop-shaping interpretation for first-order methods, and Section~\ref{sec:main} presents our main results involving the small gain theorem and connections to complementary sensitivity.

\section{Preliminaries}
\label{sec:PF}

\subsection{Spaces and operators}
Let $\ell_{2e}^p$ denote the sequences $x\defeq (x^0,x^1,\dots) \subseteq \R^p$, and let $\ell_2^p \subseteq \ell_{2e}^p$ be the set of square-summable sequences, so if $x \in \ell_2^p$, then
$\sum_{k=0}^\infty \| x^k\|^2<\infty$ where
$\|x^k\|^2\defeq (x^k)^\tp x^k$ denotes the
standard Euclidean norm.
We will omit the superscript $p$ when it is implied by context.
%
%
The gain of a causal operator $K: \ell_{2e}\to \ell_{2e}$ is defined as
\begin{align} \label{eq: Lipschitz}
\|K\| \defeq  \sup_{x\in \ell_2; x \neq 0} \frac{\|K x\|}{\|x\|} 
\end{align}
In addition, $K$ is said to be bounded if it has a finite gain.
Notice this gain is induced by $\ell_2$ signals while the operator $K$ itself is defined on $\ell_{2e}$. This definition makes sense since any bounded operator on $\ell_2$ to itself has a natural causal extension to the operator from $\ell_{2e}$ to $\ell_{2e}$.  Clearly, every bounded operator must map zero inputs to zero outputs.

\if\MODE3
\subsection{Various objective functions in optimization}
\else
\subsection{Various classes of objective functions in optimization}
\fi

Consider the unconstrained optimization problem
\begin{align}
\label{eq:minP}
\min_{x\in \R^p} \,\, f(x)
\end{align}
%
where it is assumed that there exists a unique $x^\star\in \R^p$ satisfying $\grad f(x^\star)=0$.
How to solve \eqref{eq:minP} and find $x^\star$ heavily depends on the assumptions about $f$. A simple assumption is that $f$ is quadratic. Two other common assumptions are $L$-\emph{smoothness} and \emph{strong convexity}. A continuously differentiable function $f:\R^p\to \R$ is $L$-smooth if  the following inequality holds for all $x, y\in \R^p$
\begin{align}
\|\grad f(x)-\grad f(y)\|\le L \|x-y\|.
\end{align}
We define $\mathcal{L}(L)$ to be the set of $L$-smooth functions.
The continuously differentiable function $f$ is $m$-strongly convex if  the following inequality holds for all $x,y \in \R^p$
\begin{align}
\label{eq:strongC}
f(x)\ge f(y)+\grad f(y)^\tp (x-y)+\frac{m}{2} \|x-y\|^2.
\end{align}
Note that we recover ordinary convexity in~\eqref{eq:strongC} if $m=0$. We define $\mathcal{F}(m,L)$ to be the set of functions that are both $L$-smooth and $m$-strongly convex.
The class $\mathcal{F}$ covers a large family of objective functions in machine learning, including $\ell_2$-regularized logistic regression~\cite{teo2007}, smooth support vector machines~\cite{lee2001ssvm}, etc. Clearly, $\mathcal{F}(m,L)\subset\mathcal{L}(L)$.
For all $f \in \mathcal{F}(m,L)$, the following inequality holds for all $x\in \R^p$
\begin{align}
\label{eq:gradient3}
\bmtx x-x^\star \\ \grad f(x)\emtx^\tp \bmtx -2mL I_p & (L+m) I_p \\  (L+m) I_p & -2 I_p\emtx  
\bmtx x-x^\star \\ \grad f(x)\emtx\ge 0
\end{align}
where $I_p$ denotes the $p\times p$ identity matrix \cite[Lemma 6]{Lessard2014}.
On the other hand, a function satisfying the above inequality may not belong to $\mathcal{F}(m,L)$, and may not even be convex. The set of continuously differentiable functions satisfying \eqref{eq:gradient3} is denoted as $\mathcal{S}(m,L)$. This class of functions has sector-bounded gradients, and includes $\mathcal{F}(m,L)$ as a subset.

\subsection{Review of first-order optimization methods}

A classical way to solve~\eqref{eq:minP} is the gradient descent method, which uses the iteration~\eqref{eq:FG} to gradually converge to $x^\star$.
%
%
The intuition behind gradient descent method is as follows. At each step $k$, we find a quadratic approximation of $f$ about $x^k$, which hopefully captures the local structure of $f$, and we solve the quadratic minimization problem
\begin{align}
\min_{x\in \R^p}\left(f(x^k)+\grad f(x^k)^\tp (x-x^k)+\frac{1}{2\alpha}\|x^k-x\|^2\right).
\end{align}
%
When $f\in \mathcal{S}(m,L)$, if $\alpha$ is chosen well, then there exists a constant $\rho\in (0,1)$ and a constant $c\ge 1$ such that
\begin{align}
\|x^k-x^\star\|\le c \rho^{k} \|x^0-x^\star\|
\end{align}
Thus the iterates $\{x^k\}$ converge exponentially to $x^\star$. By convention, this is known as \emph{linear convergence} in the optimization literature.
For example, we can choose $\alpha=\frac{2}{L+m}$ and obtain $\rho=\frac{L-m}{L+m}$ and $c=1$. Another popular choice is $\alpha=\frac{1}{L}$, which leads to $\rho=1-\frac{m}{L}$ and $c=1$. These results are formally documented in \cite[Section 4.4]{Lessard2014}. It is emphasized that the proofs of these results only require \eqref{eq:gradient3}.

When $f\in \mathcal{F}(m, L)$, one can achieve a better convergence rate $\rho=\sqrt{1-\sqrt{\frac{m}{L}}}$ using Nesterov's accelerated method:
\begin{align}\label{eq:NFG}
\begin{aligned}
x^{k+1}&=y^k-\alpha \grad f(y^k)\\
y^k&=(1+\beta) x^k-\beta x^{k-1}
\end{aligned}
\end{align}
where $\alpha=\frac{1}{L}$ and $\beta=\frac{\sqrt{L}-\sqrt{m}}{\sqrt{L}+\sqrt{m}}$. When $L/m$ is large, Nesterov's accelerated method guarantees a much faster convergence rate compared to the gradient descent method. This fact was stated in \cite[Theorem 2.2.3]{YEN03a}.

When $f$ is a quadratic function, one can accelerate the gradient descent method by incorporating a momentum term into the iteration, such as the Heavy-ball method~\eqref{eq:heavy}. Although the Heavy-ball method works extremely well for quadratic objective functions, it can fail to converge for other functions in $\mathcal{F}(m,L)$; see \cite[Section 4.6]{Lessard2014}.

The intuitions behind Nesterov's accelerated method and the Heavy-ball method are still not fully understood. Hence, there is no intuitive way to modify these methods to accelerate the convergence when optimizing more general functions, i.e. $f\in\mathcal{S}(m,L)$ or $f\in\mathcal{S}(m,L)\cap\mathcal{L}(L)$. Gaining intuition for these methods can be beneficial for designing accelerated schemes for more general classes of objective functions.

Finally, it is worth mentioning that every optimization method mentioned in this section can be cast as the feedback interconnection $F_u(P, K)$ as shown in Fig~\ref{fig:fdbd}. Here, $P \defeq  \grad f$ is a static nonlinearity and $K$ is a linear time-invariant (LTI) system (the algorithm).  

\begin{figure}[h]
\centering
\scalebox{0.9}{
\begin{picture}(172,80)(23,25)
 \thicklines
 \put(80,25){\framebox(30,30){$K$}}
 \put(80,75){\framebox(30,30){$\grad f$}}
 \put(42,64){$u$}
 \put(55,40){\line(1,0){25}}  
\put(55,40){\line(0,1){50}}  
 \put(55,90){\vector(1,0){25}}  
 \put(143,64){$v$}
 \put(135,90){\line(-1,0){25}}  
 \put(135,40){\line(0,1){50}}  
 \put(135,40){\vector(-1,0){25}}    
\end{picture}
} 
\caption{Feedback representation for the optimization method $F_u(\grad f, K)$.}
\label{fig:fdbd}
\end{figure}
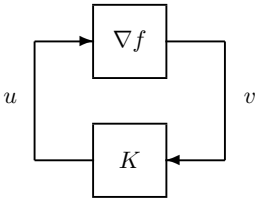

Feedback representations for optimization methods are discussed in \cite[Section 2]{Lessard2014}. In \cite{Lessard2014}, $P$ is a static nonlinear operator that maps $u$ to $v=Pu$ as $v^k=\grad f(u^k)$.  However, this is not a bounded operator since it does not map zero inputs to zero outputs. For the convenience of our discussion, we will choose $P$ to be the operator which maps $u$ to $v=Pu$ as $v^k=\grad f(u^k+x^\star)$ where $x^\star$ is the unique point satisfying $\grad f(x^\star)=0$. Then this choice of $P$ leads to a bounded operator. One can perform a state shifting argument to the feedback representations in \cite{Lessard2014} and cast all the mentioned optimization methods as
\begin{align}
\label{eq:interSS}
\begin{split}
\xi^{k+1}&=A \xi^k +B v^k\\
u^k&=C \xi^k\\
v^k&=\grad f(u^k+x^\star)
\end{split}
\end{align}
where $(A, B, C)$ are the state matrices of $K$.
For example, to rewrite the gradient descent method \eqref{eq:FG}, one can set $\xi^k=u^k=x^k-x^\star$ and $v^k=\grad f(x^k)=\grad f(u^k+x^\star)$. Then \eqref{eq:FG} can be cast as \eqref{eq:interSS} with $(A, B, C)=(I_p, -\alpha I_p, I_p)$.
Notice here $\xi^k=x^k-x^\star$ and hence the convergence rate of the optimization algorithm is equivalent to the rate at which $\xi^k$ goes to $0$.  The optimization method converges to the optimum $x^\star$ at a rate $\rho$ if and only if the model \eqref{eq:interSS} drives $\xi^k$ to $0$ from any initial conditions at the same rate $\rho$.
Using similar arguments, Nesterov's accelerated method and the Heavy-ball method can be written as \eqref{eq:interSS}. In these two cases, the associated state matrices for $K$ are the same as (2.5) and (2.7) in \cite{Lessard2014}, although the states have been shifted by $x^\star$.  When $f\in \mathcal{S}(m, L)$, the inequality \eqref{eq:gradient3} imposes a sector bound on the input/output pair of $P$. Let $v=Pu$. Then the following inequality holds for all $k$
\begin{align}
\label{eq:sectorB}
\bmtx u^k \\ v^k\emtx^\tp \bmtx -2mL I_p & (L+m) I_p \\  (L+m) I_p & -2 I_p\emtx  
\bmtx u^k \\ v^k\emtx\ge 0.
\end{align}
The above inequality is important for further analysis of optimization methods.

\subsection{Input-output stability and small gain theorem}
The key analysis tool in this paper is the small gain theorem, which is now briefly reviewed.
Suppose two causal operators $P : \ell_{2e}\to \ell_{2e}$ and $K : \ell_{2e} \to \ell_{2e}$ both map zero
input to zero output.  Let $[P, K]$ denote the feedback interconnection of $P$ and $K$ illustrated in Fig.~\ref{fig:feedback}:
\begin{align} \label{eq: FB}
\left\{\begin{array}{l}
v = P u + e  \\
u= Kv+ r.
\end{array}\right.
\end{align}

\begin{figure}[h]
\centering
\scalebox{1}{
\begin{picture}(162,66)(10,37)
\thicklines
\put(17,90){$r$} 
\put(10,85){\vector(1,0){30}}
\put(43,85){\circle{5}}
\put(55,90){$u$}
\put(45,85){\vector(1,0){33}}
\put(78,72){\framebox(26,26){$P$}}
\put(104,85){\line(1,0){35}}
\put(139,85){\vector(0,-1){37}}
\put(165,50){$e$} 
\put(172,45){\vector(-1,0){31}}
\put(139,45){\circle{5}}
\put(121,50){$v$}
\put(136,45){\vector(-1,0){32}}
\put(78,32){\framebox(26,26){$K$}}
\put(78,45){\line(-1,0){35}}
\put(43,45){\vector(0,1){38}}
\end{picture}
} 
\caption{Feedback interconnection with exogenous inputs}
\label{fig:feedback}
\end{figure}
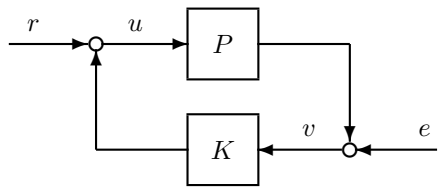

The interconnection $[P, K]$ is said to be \emph{well-posed} if the map $(u, v) \mapsto (r, e)$ defined by \eqref{eq: FB} has a causal inverse on $\ell_{2e}$. It is (input-output) \emph{stable} if it is well-posed and this inverse causal map from $(r, e)$ to $(u, v)$ is bounded.
Clearly, $u, v \in \ell_{2}$ for all $r, e\in \ell_2$ if $[P, \Delta]$ is stable.  Well-posedness holds only if the solutions to~\eqref{eq: FB} have no finite escape time. The small gain theorem states the following~\cite{khalil01,zames1966}.

\begin{theorem}[small gain theorem]
Suppose $P$ and $K$ are bounded causal operators and $[P, K]$ is well-posed. If $\|P\| \|K\|<1$, then $[P, K]$ is input-output stable.
\end{theorem}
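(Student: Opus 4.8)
The plan is the classical truncation argument. For an integer $T\ge 0$, write $x_T$ for the sequence obtained from $x\in\ell_{2e}$ by zeroing every entry with index exceeding $T$; then $x_T\in\ell_2$ always, $\|x_T\|$ is nondecreasing in $T$, and $\sup_T\|x_T\|=\|x\|$ (with $\|x\|=\infty$ allowed, in which case $x\notin\ell_2$). Causality of $P$ means $(Pu)_T=(Pu_T)_T$, and likewise $(Kv)_T=(Kv_T)_T$; combined with boundedness this gives $\|(Pu)_T\|=\|(Pu_T)_T\|\le\|Pu_T\|\le\|P\|\,\|u_T\|$ and the analogous estimate for $K$, where the right-hand sides are legitimate because $u_T,v_T\in\ell_2$.

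Next, fix $r,e\in\ell_2$. By well-posedness there is a causal solution $(u,v)\in\ell_{2e}\times\ell_{2e}$ of \eqref{eq: FB}. Apply the $T$-truncation to both equations. From $u=Kv+r$ and the triangle inequality, $\|u_T\|\le\|(Kv)_T\|+\|r_T\|\le\|K\|\,\|v_T\|+\|r\|$; from $v=Pu+e$, $\|v_T\|\le\|P\|\,\|u_T\|+\|e\|$. Substituting the second bound into the first yields $\|u_T\|\le\|P\|\|K\|\,\|u_T\|+\|K\|\,\|e\|+\|r\|$. Since $\|P\|\|K\|<1$, I can move the $\|u_T\|$ term to the left and divide, obtaining $\|u_T\|\le\bigl(\|K\|\,\|e\|+\|r\|\bigr)/\bigl(1-\|P\|\|K\|\bigr)$, a bound independent of $T$.

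Finally, let $T\to\infty$: the right-hand side is constant, so $\|u\|=\sup_T\|u_T\|<\infty$, i.e. $u\in\ell_2$; then $\|v\|\le\|P\|\,\|u\|+\|e\|<\infty$, and both $\|u\|$ and $\|v\|$ are bounded by a fixed multiple of $\|r\|+\|e\|$. Hence the causal inverse map $(r,e)\mapsto(u,v)$ provided by well-posedness sends $\ell_2\times\ell_2$ into itself with finite gain, which is exactly input-output stability of $[P,K]$.

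The proof is essentially bookkeeping rather than deep, and the one thing that needs care is exactly that bookkeeping: tracking which truncations appear at each step so that causality can be invoked in the form $(Pu)_T=(Pu_T)_T$, and remembering that the solution $(u,v)$ is a priori only in $\ell_{2e}$, so one must derive the estimate uniformly in $T$ \emph{before} passing to the limit. The strict inequality $\|P\|\|K\|<1$ is precisely what makes the factor $1/(1-\|P\|\|K\|)$ finite and positive, and hence what the whole argument hinges on.
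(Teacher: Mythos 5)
Your proof is correct: it is the standard truncation argument for the small gain theorem, and you handle the one genuinely delicate point properly --- deriving the a priori bound on $\|u_T\|$ uniformly in $T$ (legitimate because each truncation lies in $\ell_2$ with finite norm) before passing to the limit, rather than assuming $u\in\ell_2$ up front. The paper itself offers no proof of this statement; it cites the classical references (Zames, Khalil), and your argument is precisely the one found there, so there is nothing to contrast.
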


The small gain theorem can be used to check the input-output stability of $[P, K]$ when the gains of $P$ and $K$ are both known.
Note that
there are exogenous signals $r$, $e$ in the setup of $[P, K]$ and zero initial conditions on $K$ to ensure that $K$ maps the zero input to a zero output. In contrast, $F_u(P, K)$ allows any initial condition for $K$, so the optimization method $F_u(P, K)$ can be initialized at any initial condition $\xi^0\in \R^n$.  For any optimization method $F_u(P, K)$ described by \eqref{eq:interSS}, one can form an associated interconnection $[P, K]$ by adding the signals $(r, e)$ and fixing the initial condition of $K$ to be zero, since the nonlinear static map $P$ is set up in a way to map zero inputs to zero outputs. 
An important connection between the internal stability of $F_u(P, K)$ and the input-output stability of $[P, K]$ has been stated in \cite[Proposition 5]{Lessard2015}. Consequently, one may apply the small gain theorem for the convergence rate analysis of optimization methods.

\section{Loop-shaping interpretations for\\ optimization methods}
\label{sec:loopshape}

This section presents basic control interpretations for the gradient descent method, Nesterov's accelerated method, and the Heavy-ball method with the hope of shedding light on the general principles underlying the design of first-order methods. The goal of the optimization method is to find $x^\star$ satisfying $\grad f(x^\star)=0$. Hence the optimization method may be viewed as a controller that regulates the plant ``$\grad f$" to zero. When viewing $\grad f$ as the plant one wants to control, the unconstrained optimization problem \eqref{eq:minP} is an output regulation problem, and the LTI part $K$ in the first-order optimization method can be viewed as a controller. The key issue for this output regulation problem is that the equilibrium point $x^\star$ is unknown.

Transfer functions for the controller $K$ are listed in Table~\ref{tab:TransK}. We use the symbol $\otimes$ to denote the Kronecker product. These products appear because the controllers corresponding to our algorithms of interest are repetitions of a single-input-single-output (SISO) system.

\begin{table}[!h]
\begin{center}
\caption{Transfer function $K(z)$ for first-order methods}
\label{tab:TransK}
\begin{tabular}{l|l}\hline\rule{0pt}{2.6ex}%
Optimization Method  & Controller $K$ \\\hline\rule{0pt}{5.2mm}%
Gradient Descent & $I_p\otimes \frac{-\alpha}{z-1}$ \\[2mm]
Heavy-Ball & $I_p\otimes \frac{-\alpha z}{z^2-(1+\beta)z +\beta}$\\[2mm]
Nesterov's Method & $I_p \otimes \frac{-\alpha (1+\beta) z+\alpha \beta}{z^2-(1+\beta)z +\beta}$ \\[2mm]
\hline
\end{tabular}
\end{center}
\end{table}

For the gradient descent method, $K$ is a pure integrator. Hence, the gradient descent regulates the nonlinear plant $P$ via pure integral control. Integral action is necessary since the algorithm must converge to $x^\star$, which amounts to having zero steady-state error when $K$ tracks a step input.

The Heavy-ball method~\eqref{eq:heavy} differs from gradient descent in the inclusion of an additional momentum term. This momentum term may be viewed as a lag compensator.
The Heavy-ball method corresponds to the following controller:
\begin{align}
K=I_p\otimes \left(\frac{-\alpha}{z-1}\right) \left(\frac{z}{z-\beta}\right)
\end{align}
The first term provides integral action to ensure zero steady-state error as with the gradient method, while the second term 
is a discrete-time lag compensator. The lag compensation has the net effect of 1) 
boosting low-frequency response by a factor of roughly $\frac{1}{1-\beta}$, which improves the tracking speed of the controller and hence the convergence of the algorithm and 2) attenuating high-frequency response by a factor of roughly $\frac{1}{1+\beta}$. 
It intuitively makes sense that the Heavy-ball method can accelerate convergence for quadratic objectives, since the plant $P$ becomes a linear operator in this case. However, the lag compensator increases the slope of the loop gain near the crossover frequency, which may have a detrimental effect on the robustness of the closed loop.
This qualitative observation is confirmed by the fact that the Heavy-ball method may fail to converge at all if the objective function is relaxed to include more general strongly convex functions~\cite{Lessard2014}.

Unlike the Heavy-ball method, Nesterov's accelerated method performs well when applied to strongly-convex objective functions. A control interpretation is that Nesterov's accelerated method includes derivative control to decrease the slope of the loop gain near the crossover frequency, which significantly improves the robustness of the algorithm for certain classes of nonlinearities. To see the derivative controller in Nesterov's accelerated method, rewrite \eqref{eq:NFG} as
\begin{multline}
y^{k+1}=y^k+\beta(y^k-y^{k-1})-\alpha \grad f(y^k)\\
-\alpha \beta (\grad f(y^k)-\grad f(y^{k-1}))
\end{multline}
The last term is a difference of the plant output $\grad f$, and can be viewed as a derivative control.

Nesterov's method may also be interpreted as lag compensation together with integral action, as in the Heavy-ball case. The corresponding controller is
\begin{align}
K=I_p\otimes\left(\frac{-\alpha}{z-1}\right) \left(\frac{(1+\beta)z -\beta}{z-\beta}\right),
\end{align}
which has a zero at $z=\frac{\beta}{1+\beta}$, and this helps increase the slope of the Bode plot near the crossover frequency.
The control interpretations for different optimization methods are summarized in Table \ref{tab:GI1}.

\begin{table}[!h]
\begin{center}
\caption{Control interpretations for first-order methods}
\label{tab:GI1}
\begin{tabular}{l|l}\hline\rule{0pt}{2.6ex}%
Optimization Methods & Control Structure \\\hline\rule{0pt}{5.2mm}%
Gradient Descent & Integral Control \\[1mm]
Heavy-Ball & Lag $+$ Integral Control\\[1mm]
Nesterov's Method & Lag $+$ PID Control \\[1mm]
\hline
\end{tabular}
\end{center}
\vspace{-2mm}
\end{table}

We now demonstrate that the design of state-of-the-art optimization methods is actually consistent with general loop-shaping principles from control theory. 
The loop-shaping principle states that the low-frequency loop gain should be sufficiently large to ensure good tracking performance while the high-frequency loop gain should be small enough for the purpose of noise rejection. In addition, the slope of the loop gain near the crossover frequency should be flat (typically around $-20$ dB/decade) to assure a proper phase margin and good robustness. A thorough discussion on loop-shaping can be found in standard references \cite{franklin2010, nise2007, ogata2010}.

Given a function $f\in \mathcal{F}(m,L)$, the standard gradient descent stepsize is $\alpha=\frac{1}{L}$, and the standard parameter choice for Nesterov's accelerated method is $\alpha=\frac{1}{L}$ and $\beta=\frac{\sqrt{L}-\sqrt{m}}{\sqrt{L}+\sqrt{m}}$. Other parameter choices, when $f$ is quadratic for example, are documented in \cite[Proposition 1]{Lessard2014}. Fig.~\ref{fig:bodeplot} shows the Bode plots of the resultant controllers $K$ (only the SISO part) for all these parameter choices under the assumption that $m=0.01$ and $L=1$.

The Bode plots are consistent with the properties of these optimization methods when a loop-shaping intuition is adopted. First, the gradient descent method with the standard stepsize $\alpha=\frac{1}{L}$ is known to be slower than other first-order methods when $f$ is quadratic. This is reflected in the Bode plot, which shows that the gradient method has a relatively low gain particularly in the low-frequency region. Using the optimal tuning of $\alpha=\frac{2}{L+m}$ improves the gain slightly.
%

Second, the optimal quadratic tuning for all three methods leads to controllers whose crossover frequencies are roughly at $0.5$\,Hz. Intuitively, such tuning places excessive weight on tracking performance and is very fragile to noise at the output of the plant $P$. This is consistent with the known robustness properties of these methods as well. For example, the gradient method with $\alpha=\frac{1}{L}$ is known to be very robust to the noise in the gradient computation while the gradient method with $\alpha=\frac{2}{m+L}$ is known to be fragile to such noise \cite[Section 5.2]{Lessard2014}. Comparing the high frequency responses of these two cases immediately leads to the same conclusion. Finally, the slope of Bode plot at the crossover frequency supports the fact that Nesterov's method works for a larger class of functions than the Heavy-ball method.

\begin{figure}[!h]
\centering
\includegraphics[width=\linewidth]{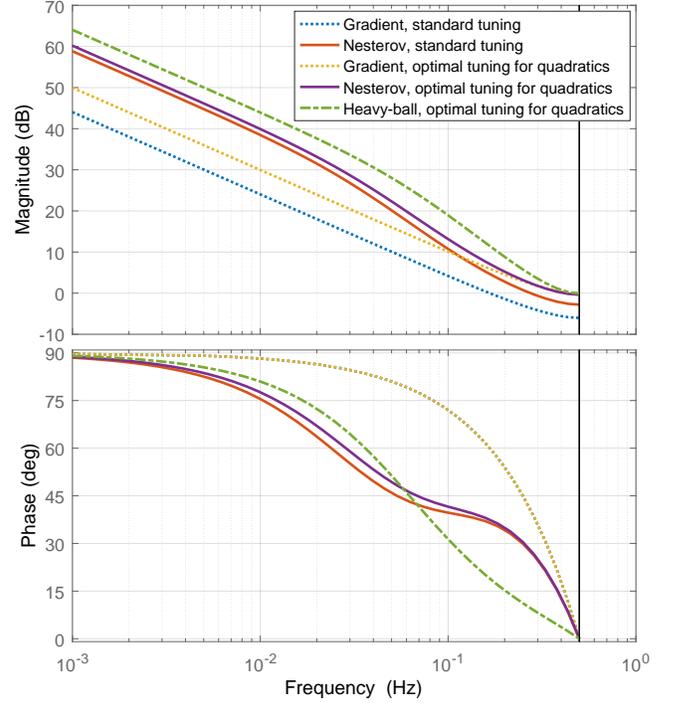}
\caption{Bode plots of $K$ for various first-order methods with commonly-used parameter tunings.}\label{fig:bodeplot}
\end{figure}

In summary, the intuition brought by the traditional loop-shaping theory is consistent with the known properties of the existing first-order methods. This suggests that loop-shaping intuition may be used as a general high-level guideline in the design of optimization methods. The control interpretations above also indicate that classical PID tuning~\cite{ang2005pid} can be used for optimization algorithm design. For example, one could drop the lag compensation and simply use the PID controller:
\begin{align}
K(z)=I_p\otimes \frac{-\alpha(1+\beta)z+\alpha \beta}{z(z-1)}
\end{align}
It remains an open question as to how to choose an appropriate $K(z)$ subject to different assumptions on the objective function. Since acceleration schemes for the optimization of quadratic functions or functions in $\mathcal{F}(m,L)$ already exist,
we will focus on the case $f\in \mathcal{S}(m,L)$.
We will derive one connection between such an optimization design problem and classical control synthesis theory.

\section{Analysis and design of optimization methods using the small gain theorem}
\label{sec:main}

In this section, it is assumed that $f\in \mathcal{S}(m,L)$ and $x^\star$ is the unique point satisfying $\grad f(x^\star)=0$.

\subsection{New feedback representations for first-order methods}
The feedback representation \eqref{eq:interSS} for first-order  methods involves a nonlinear operator $P$ which belongs to the sector $(m,L)$ \eqref{eq:sectorB}. This does not coincide perfectly with a gain bound since upper and lower bounds don't match. We will therefore, use a loop-shifted  
$F_u(P', K')$ that ensures the small gain condition on $P'$ captures the full sector. Choose $P'$ to map $u$ to $v=Pu$ as $v^k=u^k-\frac{2}{m+L}\grad f(u^k+x^\star)$. Then, substitute $\xi^k=u^k=x^k-x^\star$ into the gradient descent method \eqref{eq:FG} to get an alternative feedback interconnection
\begin{align}
\label{eq:newInter}
\begin{split}
\xi^{k+1}&=\left(1-\tfrac{(m+L)\alpha}{2}\right)\xi^k+\tfrac{(m+L)\alpha}{2}v^k\\
u^k&=\xi^k\\
v^k&=u^k-\tfrac{2}{L+m}\grad f(u^k+x^\star)
\end{split}
\end{align}
Direct manipulation of 
\eqref{eq:gradient3} shows that 
$P'$ is in a sector $(-\frac{L-m}{L+m},\frac{L-m}{L+m})$, which leads to gain bound $\|P'\|\le \frac{L-m}{L+m}$. 

Suppose $K=I_p\otimes \bar{K}$ where $\bar{K}$ is a SISO LTI system.
In general, a loop transformation argument can be used to show that any optimization method $F_u(P, K)$ can also be represented as $F_u(P', K')$ where $K'=I_p\otimes \bar{K}'$ and $\bar{K}'=\bar{K}/(\bar{K}-\frac{2}{m+L})$.
Consequently, the feedback interconnection \eqref{eq:newInter} provides another  way to model first-order methods. 

\subsection{Main theorem}

Our main approach is inspired by the loop transformation used in \cite{Lessard2015, BinHu2015a}.
For any $\rho\in(0,1)$, the operators $\rho^+$ and $\rho^-$ are defined as the
time-domain, time-dependent multipliers $\rho^k$, $\rho^{-k}$, respectively. Here, superscripts indicate $k$-th power. 
Define $K_\rho'\defeq\rho^{-} \circ K' \circ \rho^{+}$, and $P_\rho'\defeq\rho^{-} \circ P' \circ \rho^{+}$. From \mbox{\cite[Section 3]{Lessard2015}}, one can conclude $F_u(P', K')$ converges at rate $\rho$ if $[P_\rho', K_\rho']$ is input-output stable and $K_\rho'(z)=K'(\rho z)$. Similarly, define $\bar{K}_\rho'\defeq\rho^{-} \circ \bar{K}' \circ \rho^{+}$, and one has $\bar{K}_\rho'(z)=\bar{K}'(\rho z)$. The main result of this paper is stated below.

\smallskip

\begin{theorem}
\label{thm:main}
Let $\bar{K}'=\bar{K}/(\bar{K}-\frac{2}{m+L})$, and $K'=I_p\otimes \bar{K}'$.
If $\|\bar{K}_\rho'\|<\frac{L+m}{L-m}$, then the optimization method $F_u(P', K')$ has a linear convergence rate $\rho$.
\end{theorem}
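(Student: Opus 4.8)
The plan is to reduce the claim to a direct application of the small gain theorem after the $\rho$-dilation. First I would set up the two ingredients needed: a gain bound on $P_\rho'$ and a gain bound on $K_\rho'$. For the plant, recall from the preceding subsection that $P'$ lies in the sector $(-\frac{L-m}{L+m},\frac{L-m}{L+m})$, so $\|P'\|\le \frac{L-m}{L+m}$. Since the sector bound \eqref{eq:sectorB} holds pointwise in $k$, multiplying $u^k$ and $v^k$ by $\rho^k$ preserves it (the diagonal multiplier $\rho^k$ commutes with the static nonlinearity in the sense that $P_\rho'$ is again a static, memoryless, sector-bounded map with the same sector), hence $\|P_\rho'\|\le \frac{L-m}{L+m}$ as well. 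This is the easy half. For the controller, $K_\rho'=I_p\otimes \bar K_\rho'$, and the $\ell_2$-gain of a Kronecker product $I_p\otimes \bar K_\rho'$ equals the gain of the SISO block $\bar K_\rho'$; so by hypothesis $\|K_\rho'\|=\|\bar K_\rho'\|<\frac{L+m}{L-m}$.

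Next I would invoke the small gain theorem (the Theorem stated just before the loop-shaping section): with $\|P_\rho'\|\,\|K_\rho'\| < \frac{L-m}{L+m}\cdot\frac{L+m}{L-m}=1$, provided $[P_\rho',K_\rho']$ is well-posed, the interconnection $[P_\rho',K_\rho']$ is input-output stable. Well-posedness should be checked separately but is routine: it follows because $\bar K'$ (and hence $\bar K_\rho'$) has no direct feedthrough term conflicting with the static map, or more concretely because the original $F_u(P,K)$ is a genuine iteration with no finite escape time and loop-shifting/$\rho$-scaling preserves this. Then I would apply the cited equivalence from \cite[Section 3]{Lessard2015}: input-output stability of $[P_\rho',K_\rho']$ implies $F_u(P',K')$ converges at rate $\rho$, where we use $K_\rho'(z)=K'(\rho z)$ as noted in the text. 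Finally, since $F_u(P',K')$ is just a loop-shifted representation of the original optimization method (with $\xi^k=x^k-x^\star$), rate-$\rho$ convergence of $\xi^k\to 0$ is exactly linear convergence of the optimization method at rate $\rho$, which is the desired conclusion.

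The main obstacle I expect is not any single hard estimate but rather stitching together the three reductions cleanly: (i) confirming that the $\rho$-scaling does not enlarge the plant's sector — i.e.\ that $P_\rho'$ genuinely inherits the bound $\frac{L-m}{L+m}$ rather than something $\rho$-dependent — which works precisely because \eqref{eq:sectorB} is a pointwise-in-$k$ quadratic inequality and $\rho^k$ acts as a positive scalar at each time; (ii) the identity $\|I_p\otimes \bar K_\rho'\| = \|\bar K_\rho'\|$, which is standard but deserves a one-line justification via the separable structure over coordinates; and (iii) discharging well-posedness of $[P_\rho',K_\rho']$ so that the small gain theorem applies as stated. Once these are in place the argument is essentially a chain of implications. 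I would therefore structure the written proof as: bound $\|P_\rho'\|$; bound $\|K_\rho'\|$; note the product is $<1$; check well-posedness; apply small gain; apply the \cite{Lessard2015} rate equivalence; conclude.
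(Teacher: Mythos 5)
Your proposal is correct and follows essentially the same route as the paper's proof: bound $\|P_\rho'\|\le\frac{L-m}{L+m}$ using the pointwise (static) nature of the sector-bounded nonlinearity, note $\|K_\rho'\|=\|\bar{K}_\rho'\|$, multiply to get the small gain condition, and invoke \cite[Proposition 5]{Lessard2015} to pass from input-output stability of $[P_\rho',K_\rho']$ to rate-$\rho$ convergence of $F_u(P',K')$. Your explicit attention to well-posedness is a point the paper's own proof leaves implicit, but otherwise the arguments coincide.
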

\begin{proof}
Notice $P'$ is a pointwise nonlinearity, and hence one can use the small gain condition on $P'$ to show $\|P_\rho'\|\le \frac{L-m}{L+m}$ (see Section 5.1 in \cite{Lessard2015} or Section IV.C in \cite{BinHu2015a} for detailed arguments). In addition, $\|K_\rho'\|=\|I_p\otimes \bar{K}_\rho'\|=\|\bar{K}_\rho'\|<\frac{L+m}{L-m}$, and $\|K_\rho'\|\|P_\rho'\|<1$.
By the small gain theorem, $[P_\rho', K_\rho']$ is input-output stable. By \mbox{\cite[Proposition 5]{Lessard2015}}, $F_u(P', K')$ converges at rate $\rho$.
\end{proof}

\vspace{0.1in}

The power of Theorem~\ref{thm:main} is that it connects the convergence rate analysis of the optimization method to an input-output gain computation on a SISO system $\bar{K}_\rho'$. Note that the $\ell_2$-induced norm~\eqref{eq: Lipschitz} of a stable LTI system is equal to its $\mathcal{H}_\infty$-norm \cite{zhou96}. In addition, we have $\bar{K}_\rho'(z)=\bar{K}'(\rho z)$. Hence we only need to verify that $\bar{K}'(\rho z)$ is stable and then compare the $\mathcal{H}_\infty$-norm of $\bar{K}'(\rho z)$ to $\frac{L+m}{L-m}$.

\subsection{Recovery of rate results for gradient descent}

As a sanity check, we apply Theorem~\ref{thm:main} to recover convergence rate results for the gradient descent method applied to functions in $\mathcal{S}(m,L)$.
Since $\bar{K}(z)=\frac{-\alpha}{z-1}$, we have
\begin{align}
\bar{K}'(z)=\frac{\alpha (m+L)}{2z-2+\alpha (m+L)}
\end{align}
If $\alpha=\frac{2}{m+L}$, it is straightforward to obtain
\begin{align}
\bar{K}'(z)=\frac{1}{z}\,,\quad\bar{K}'(\rho z)=\frac{1}{\rho z}
\end{align}
Clearly, $\bar{K}'(\rho z)$ is stable for any $\rho>0$. Moreover, $\|\bar{K}'(\rho z)\| = \rho^{-1}$.
%
By Theorem~\ref{thm:main}, the gradient descent method converges for any $\rho>\frac{L-m}{L+m}$. This recovers the existing rate result for the gradient method with $\alpha=\frac{2}{L+m}$. 
Another popular choice for $\alpha=\frac{1}{L}$. In this case, the shifted controller is given by
\begin{align}
\bar{K}'(z)=\frac{1+\kappa}{2z-\kappa+1}
\end{align}
where $\kappa\defeq \frac{L}{m}$ is the condition number. Hence, one has
\begin{align}
\label{eq:case2}
\bar{K}'(\rho z)=\frac{1+\kappa}{2\rho\kappa z-\kappa+1}
\end{align}
When $\rho>\frac{1}{2}(1-\frac{1}{\kappa})$, $\bar{K}'(\rho z)$ is stable. In addition, one can substitute $z=1$ into \eqref{eq:case2} to obtain the peak frequency response (the $\mathcal{H}_\infty$ norm) of $\bar{K}'(\rho z)$. To ensure this norm is smaller than $\frac{L+m}{L-m}$, one has the condition
\begin{align}\label{twentythree}
\frac{1+\kappa}{2\rho \kappa -\kappa +1}<\frac{\kappa+1}{\kappa-1}.
\end{align}
Upon simplifying~\eqref{twentythree} together with $\rho>\frac{1}{2}(1-\frac{1}{\kappa})$, we finally obtain $\rho>1-\frac{1}{\kappa}$, which is the linear convergence rate for the gradient descent method when $\alpha=\frac{1}{L}$.

\begin{remark}
A small technical issue in the above analysis is that the rate result proved by the small gain theorem is a strict inequality. This is due to the fact that for LTI systems, input-output stability is slightly stronger than the global uniform stability. See \cite[Remark 1]{BinHu2015a} for a detailed explanation. This issue is negligible from a practical standpoint.
\end{remark}

\subsection{Connections to complementary sensitivity}

Since we have $\bar{K}'(\rho z)=\bar{K}(\rho z)/(\bar{K}(\rho z)-\frac{2}{m+L})$, $\bar{K}'(\rho z)$ is the complementary sensitivity function of the closed-loop system $F_u(\bar{K}(\rho z), -\frac{m+L}{2})$.
Accelerating optimization for $f\in \mathcal{S}(m,L)$ requires finding the smallest $\rho$ such that there exists $\bar{K}(\rho z)$ that stabilizes $\bar{K}'(\rho z)$ while ensuring that $\|\bar{K}'(\rho z)\|  < \frac{L+m}{L-m}$. One possible method would be to perform a bisection search on $\rho$.
For each $\rho$, we can try to design $\bar{K}$ to stabilize $\bar{K}'(\rho z)$ and minimize the $\mathcal{H}_\infty$-norm of $\bar{K}'(\rho z)$ at the same time. This subproblem may be reformulated as an $\mathcal{H}_\infty$ state feedback synthesis problem since $\bar{K}$ always contains a pure integrator
and the state of the scaled dynamics $\frac{1}{\rho z-1}$ is accessible at every timestep. Consequently, the only design variable is the state feedback gain, which happens to be the stepsize of the gradient method. This explains why acceleration in this case is difficult even given memory of past iterates. 
It is worth noting that $\bar{K}(\rho z)$ always has an unstable pole at $z = \rho^{-1}$.
There is a large body of discrete-time complementary sensitivity integral results \cite{middleton1991, mohtadi1990bode, sung1988a, sung1989b} which could potentially be used in studying the design limits of $\bar{K}$ under the analytic constraints posed by the unstable pole at $\rho^{-1}$.

From the above connection, we can see that acceleration typically requires some function properties which can be decoded as constraints involving dynamics. The condition \eqref{eq:gradient3} is a static constraint, and it is hard to design accelerated schemes with this single constraint. However, it is still possible to accelerate non-convex optimization when other function properties are available, e.g. $f\in \mathcal{L}(L)$.

\section{Conclusion}
\label{sec:con}

This paper discussed connections between the analysis of optimization algorithms and classical control-theoretic concepts. Specifically,  the gradient method, the Heavy-ball method, and Nesterov's accelerated method were interpreted as combinations of PID and lag compensators. A loop-shaping interpretation was also used to explain several well-known robustness properties of these algorithms.

We invoked the small gain theorem to show that finding worst-case convergence rates for algorithms amounts to computing the gain of a complementary sensitivity function. In addition, we demonstrated a connection between $\mathcal{H}_\infty$ state feedback synthesis and  stepsize selections of the gradient method. 
These observations are an encouraging first step toward leveraging tools from control theory for the analysis and eventual synthesis of robust optimization algorithms. 


\bibliography{IQCandSOS}


\end{document}